\newtheorem{myDef}{Definition}
\title[AAMAS-2024 Formatting Instructions]{A Task-Driven Multi-UAV Coalition Formation Mechanism}
\author{Xinpeng Lu}
\affiliation{
	\institution{Yangzhou University}
	\city{Yangzhou}
	\country{China}}
\email{211301216@stu.yzu.edu.cn}
\author{Heng Song*}
\thanks{*Corresponding author.}
\affiliation{
	\institution{Nanjing University of Information Science \& Technology}
	\city{Nanjing}
	\country{China}}
\email{song\_heng@foxmail.com}
\author{Huailing Ma}
\affiliation{
	\institution{Yangzhou University}
	\city{Yangzhou}
	\country{China}}
\email{211301408@stu.yzu.edu.cn}
\author{Junwu Zhu*}
\affiliation{
	\institution{Yangzhou University}
	\city{Yangzhou}
	\country{China}}
\email{jwzhu@yzu.edu.cn}
\begin{abstract}
With the rapid advancement of UAV technology, the problem of UAV coalition formation has become a hotspot. Therefore, designing task-driven multi-UAV coalition formation mechanism has become a challenging problem. However, existing coalition formation mechanisms suffer from low relevance between UAVs and task requirements, resulting in overall low coalition utility and unstable coalition structures. To address these problems, this paper proposed a novel multi-UAV coalition network collaborative task completion model, considering both coalition work capacity and task-requirement relationships. This model stimulated the formation of coalitions that match task requirements by using a revenue function based on the coalition's revenue threshold. Subsequently, an algorithm for coalition formation based on marginal utility was proposed. Specifically, the algorithm utilized Shapley value to achieve fair utility distribution within the coalition, evaluated coalition values based on marginal utility preference order, and achieved stable coalition partition through a limited number of iterations. Additionally, we theoretically proved that this algorithm has Nash equilibrium solution. Finally, experimental results demonstrated that the proposed algorithm, compared to currently classical algorithms, not only forms more stable coalitions but also further enhances the overall utility of coalitions effectively.
\end{abstract}
\keywords{Multi-UAV Coalition; Coalition Formation; Cooperative Game; Shapley Value; Nash Equilibrium}
\newcommand{\BibTeX}{\rm B\kern-.05em{\sc i\kern-.025em b}\kern-.08em\TeX}
\gdef\@copyrightpermission{
	\begin{minipage}{0.3\columnwidth}
		\href{https://creativecommons.org/licenses/by/4.0/}{\includegraphics[width=0.90\textwidth]{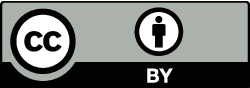}}
	\end{minipage}\hfill
	\begin{minipage}{0.7\columnwidth}
		\href{https://creativecommons.org/licenses/by/4.0/}{This work is licensed under a Creative Commons Attribution International 4.0 License.}
	\end{minipage}
	\vspace{5pt}
}
\begin{document}


\pagestyle{fancy}
\fancyhead{}


\maketitle 


\section{Introduction}

With the rapid advancements in technologies such as artificial intelligence and intelligent control, Unmanned Aerial Vehicles (UAVs) have witnessed extensive utilization in both military and civilian domains due to their high maneuverability, flexible control, and collective intelligence advantages \cite{gupta2015survey,shakeri2019design}. Existing UAVs typically operate individually or in coordinated formations. Compared to single UAV, which is limited by finite resource and capability, multi-UAV can be divided into multiple coalitions to jointly execute specific tasks. This approach can enhance task efficiency while also gives rise to the  multi-UAV coalition formation problem \cite{vig2006multi}. This problem entails the process where UAVs, after assessing tasks and their own attributes, form coalitions according to certain rules. Typically, a Coalition Formation Game (CFG) framework \cite{chalkiadakis2022computational} is used to address this problem, with a focus on designing mechanisms for forming coalitions and sharing utilities. Within this framework, transferable utility and coalition preferences need to be considered. Transferable utility enables the distribution of utility within the coalition, encouraging cooperation among UAVs to enhance overall utility. Coalition preferences refer to how UAVs select coalitions based on predefined rules, such as individual utility, social ranking \cite{lucchetti2022coalition}, coalition formation history \cite{boehmer2023causes}, etc.

Common coalition formation problems are typically addressed using greedy algorithms \cite{shehory1998methods}, dynamic programming \cite{rahwan2008coalition}, genetic algorithms \cite{yang2007coalition}, etc. Currently, the research mainly focus on the formation of homogeneous multi-UAV coalitions, optimizing cooperation strategies among coalition members to maximize overall utility. However, it overlooks the strong correlation between UAVs and task requirements, making it difficult to form coalitions that are highly compatible with tasks. Nonetheless, CFG framework still faces several challenges. On one hand, UAVs within the coalition have preference orders, such as Selfish order \cite{asheralieva2019hierarchical}, Pareto order \cite{zhang2018context,wang2013dynamic}. Different preference orders within the coalition lead to conflicts and insufficient cooperation among UAVs, limiting the improvement of task efficiency and collaboration ability, leaving room for enhancing the overall utility of the eventual coalition. On the other hand, existing mechanisms for utility distribution, such as equal distribution \cite{elkind2016price,2020Joint} and proportional distribution \cite{huang2022connectivity}, often fail to satisfy individual rationality, disregarding the heterogeneity of UAVs and differences in task attributes. They do not consider the varying contributions and capabilities of UAVs in tasks, reducing the incentive for active participation in coalitions and usually failing to achieve stable coalitions. To address these challenges, we research the problem of task-driven multi-UAV coalition formation and design a coalition formation algorithm based on marginal utility to achieve stable partition. The main contributions of this paper are summarized as follows:
\begin{itemize}
	\item To address the problem of mismatch between coalition formation and task requirements, we design a revenue function based on the coalition revenue threshold and propose the Motivation Incentive Theorem based on this function. This reflects the characteristics of the coalition's working capacity being related to task requirements and provides numerical basis for UAVs to join or leave a coalition.
	
	\item To resolve the problem of low utility, we define a novel preference orders based on Marginal Utility, focusing on the impact of UAVs on other UAVs upon joining a new coalition, and use the Shapley value to ensure fair utility distribution.
	
	\item To form a stable coalition structure, we propose a coalition formation algorithm based on marginal utility. We theoretically prove the convergence of this algorithm in solving Nash equilibria and show that it can achieve a stable coalition structure through a finite number of iterations. In addition, we demonstrate the effectiveness of our approach through extensive experiments on overall coalition utility.
\end{itemize}

The rest of the paper is structured as follows. Section 2 presents the related work. Section 3 provides a formal expression of the model and proves its properties. Section 4 describes the algorithm in detail. Section 5 demonstrates the effectiveness of the method through experiments. Finally, Section 6 concludes the paper.


\section{Related Work}

The formation of coalitions among multi-agent is one of the hotspots in the field of AI. In recent years, numerous scholars have conducted extensive research on this subject. In this section, we review the related work on coalition networks in multi-UAV and CFG.

\subsection{Coalition Networks in Multi-UAV}
In the context of coalition networks in multi-UAV, researchers have focused on how to achieve collaborative work among them\cite{han2023smart, tang2023swarm}. For example, Walid Saad \cite{saad2009selfish} investigated the problem of coalition formation in UAV-assisted wireless networks. However, coalitions formed solely based on selfish preferences of individual UAVs resulted in low overall utility. Xiong Fei \cite{xiong2020energy} explored the problem of data transmission in multi-UAV communication and proposed two strategies: Single Coalition Strategy (SCS) and Coalition Formation Strategy (CFS), which are based on coalition game theory. These strategies aim to optimize resource utilization and improve communication efficiency. Qi \cite{qi2022task} addressed resource allocation in UAV networks using overlapped CFG, leading to an improvement in average task utility. Nevertheless, they overlooked the strong correlation between UAVs and tasks within the approach. Jer Shyuan Ng \cite{ng2020joint} introduced a joint auction and coalition formation algorithm to tackle the allocation problem within UAV coalitions. However, due to utility-maximizing behavior exhibited by UAVs, not all coalitions formed by all participating UAVs are necessarily stable – this remains an ongoing challenge.

\subsection{Coalition Formation Games}
Teamwork, clustering, and coalition formation have always been crucial and extensively topics in computer science. For example, Alia Asheralieva \cite{asheralieva2019hierarchical} addressed the computation offloading problem in Multi-Access Edge Computing networks for multiple Service Providers using a game theory and reinforcement learning based framework. Their approach enables the formation of robust and stable MEC coalitions while providing mixed strategies for base stations under Nash equilibrium. Dolev Mutzari \cite{mutzari2021coalition} employed coalition games to research the coalition formation problem among defenders in security games and proposed a solution to compute the core of the game. Barrot \cite{barrot2019stable} explored non-enviousness and stability concepts in combination, demonstrating the existence of Pareto-optimal non-envious coalition partitions in CFG, thereby establishing a stronger theoretical foundation for this area of research. Wu \cite{wu2020monte} tackled the Coalition Structure Generation problem by proposing a Monte Carlo Tree Search-based algorithm capable of converging to an optimal solution with sufficient iterations.

\textbf{Brief conclusion}. Different from all the above scenarios about multi-UAV coalition formation problem, our work focuses on the problem related to the coalition work ability and task requirements of multi-UAV driven by tasks. By satisfying the monotonic and consistent relationship between coalition utility and task revenue, the coalition is stimulated to form a structure matching task requirements. In this paper, a coalition formation algorithm based on marginal utility is designed to improve the overall utility of the coalition while ensuring the existence of a stable coalition partition.


\section{Problem Definition}

\subsection{System Model}

In this section, we research the scenario of multi-UAV coalition for collaborative task completion. As shown in Figure~\ref{fig1}, let $\mathcal{M}=\{1,2,\dots,M\}$ represented the set of $M$ tasks and $\mathcal{N}=\{1,2,\dots,N\}$ represent the set of $N$ UAVs, with the constraint that $N\geq M$\footnote{When $N\geq M$, UAVs form coalitions to complete tasks instead of assigning multiple tasks to a single UAV.}. In this problem, each UAV $j \in \mathcal{N}$ selects to execute one of the tasks $i \in \mathcal{M}$. We define the vector $\boldsymbol{s}=(s_1,s_2,\dots,s_N)$ to denote the task selections for all UAVs, where $s_j$ represents the selected task by UAV $j$, and the vector $\boldsymbol{s}_{-j}$ represents the task selections for all other UAVs except UAV $j$. The collection of UAVs choosing to execute task $i$ denoted as $C_i \subseteq \mathcal{N} \neq \emptyset$, and we use $C_i=\{j\in \mathcal{N}|s_j=i\}$ to represent this coalition. It is required that the union of all coalitions equals the set of all UAVs i.e., $\bigcup_{i \in \mathcal{M}}C_i=\mathcal{N}$ and each UAV can only execute one specific task, i.e., $\forall k,l\in \mathcal{M}(k\neq l)$, $C_k \cap C_l=\emptyset$.
\vspace{-0.2cm}
\begin{figure}[h]
	\centering
	\includegraphics[width=1\linewidth]{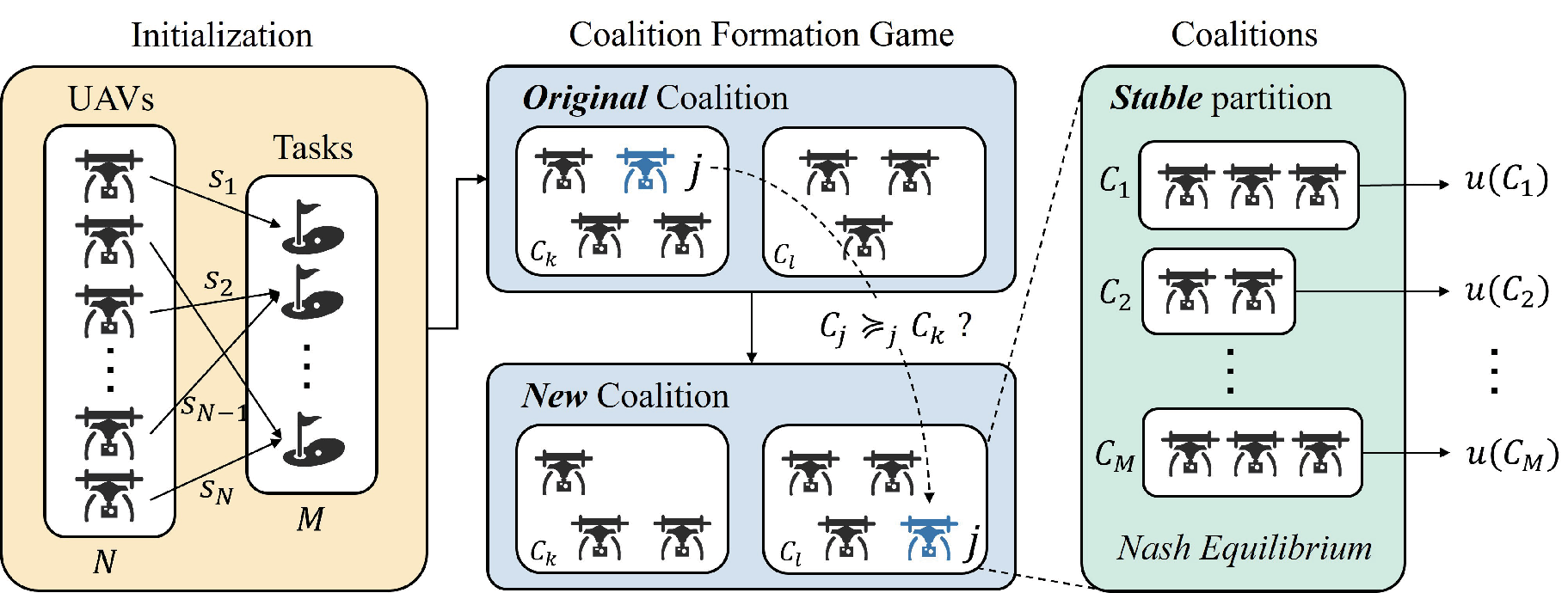}
	\caption{The schematic diagram for system model.}
	\label{fig1}
	\vspace{-0.1cm}
\end{figure}

Let the value of task $i$ be denoted as $V_i$, with a workload of $Q_i$. The efficiency of UAV $j$ in completing task $i$ is denoted as $e_j^i>0$, representing the amount of workload it can complete in one unit of time. Considering tasks such as reconnaissance and scanning that require the collaboration of multi-UAV to achieve comprehensive coverage. Within the coalition, the task is decomposed into non-overlapping subtasks, with each subtask assigned to a specific UAV based on its efficiency for completion. Therefore, the working capacity $e^i$ of the coalition $C_i$ can be expressed as the sum of the efficiencies of all UAVs, i.e., $e^i=\sum_{j \in C_i}e_j^i$. In terms of task completion time, we assume that all UAVs within the coalition start working simultaneously. Hence, the time $t_i(C_i)$ required for coalition $C_i$ to complete task $i$ is given by $t_i(C_i)=\frac{Q_i}{e^i}$.


Furthermore, we introduce the loss $L_i(C_i)$ for coalition $C_i$ to measure the consumption of completing task $i$. The calculation of loss takes into account the number of UAVs within the coalition and the task completion time. Thus, the loss $L_i(C_i)$ incurred by coalition $C_i$ in completing task $i$ is given by $L_i(C_i)=\alpha·|C_i|·t_i(C_i)$. Here, $|C_i|$ represents the number of UAVs in coalition $C_i$, and the parameter $\alpha$ represents the fixed flight cost per unit of time for UAVs, reflecting the energy consumption during UAV flight.

We denote the revenue obtained by coalition $C_i$ after completing task $i$ as $R_i(e_i)$. The coalition revenue threshold $\beta_i$ is determined based on the requirement of task $i$, specifically, representing the required working capacity of a coalition to complete task $i$. For example, tasks with a shorter expected completion time necessitate coalitions with greater working capacity. In our revenue function, the fixed task value $V_i$ represents the highest revenue that can be obtained when the coalition's capacity is equal to the task requirement. Our motivation for designing the correlation between coalition revenue and working capacity is that when there is a significant deviation between the coalition's capacity and task requirement, it can lead to excessively long completion times or resource wastage within the coalition, ultimately affecting overall task completion outcome. Specifically, inspired by the literature \cite{WU2020208}, we design two different cases for the revenue function. As shown in Figure~\ref{fig2}, these cases are described as follows:
\vspace{-0.2cm}
\begin{figure}[h]
	\centering
	\includegraphics[width=0.65\linewidth]{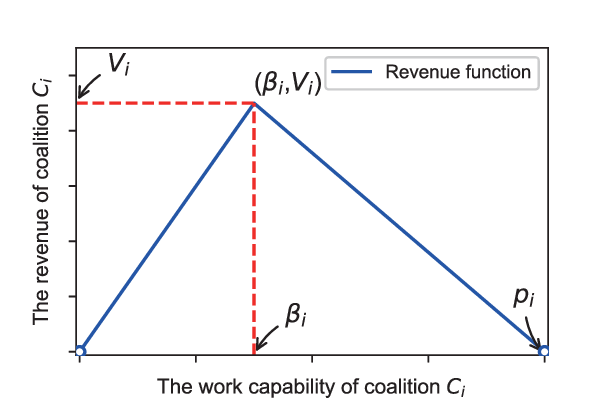}
	\caption{The schematic diagram for the revenue function based on the coalition revenue threshold.}
	\label{fig2}
\end{figure}
\vspace{-0.4cm}
\begin{itemize}
	\item When a coalition's working capacity is less than or equal to the coalition revenue threshold ($0 < e^i \leq \beta_i$), we define the task revenue as the ratio of the coalition's working capacity $e^i$ to the coalition revenue threshold $\beta_i$, i.e., $V_i e^i/\beta_i$. This represents that although a task can be completed, the potential performance of a given coalition remains unrealized due to limited working capacity resulting in low revenues.
	
	\item When a coalition's working capacity exceeds the coalition revenue threshold ($\beta_i < e^i < p_i$), the revenue derived from this excess capacity decreases linearly with the surplus, i.e., $\frac{V_i}{\beta_i-p_i}·(e^i-\beta_i)$. Here, $p_i$ represents the maximum working capacity required for task $i$, which is determined by its attributes. Since the coalition already meets the necessary working capacity for task completion, additional work capacity leads to diminishing revenue. This reflects the weakening contribution of the excess work capacity to the task.
\end{itemize}

Therefore, we can denote the revenue generated by coalition $C_i$ in completing task $i$ as 
\begin{eqnarray}\label{6}
	R_i(e^i) = \begin{cases}
		&\frac{V_i}{\beta_i} \cdot e^i, 0 < e^i \leq \beta_i \\
		&\frac{V_i}{\beta_i - p_i} \cdot (e^i - p_i), \beta_i < e^i < p_i
	\end{cases}
\end{eqnarray}

In the scenario, our goal is to enhance the utility of each coalition while forming a stable coalition structure. The utility $V_i(C_i)$ for coalition $C_i$ is denoted as $V_i(C_i)=R_i(\sum_{j\in C_i}e_j^i))-L_i(C_i)$.


To ensure that an increase (decrease) in task revenue corresponds to an increase (decrease) in coalitional utility, we establish a monotonically consistent relationship between them. This promotes coalition formation in incentive mechanisms and encourages participation in tasks that align with their capabilities.

\subsection{Proof of Theorem}

Establishing favorable model properties contributes to the design of high-performance coalition formation mechanisms. In this section, we explore and prove the desirable properties in the model proposed in Section 3.1.

\begin{theorem}
	If the coalition's utility function $V_i(C_i)$ and the task revenue function $R_i(e^i)$ exhibit the same monotonicity, then $2p_i/(1+\sqrt{1+\frac{4V_ip_i}{\alpha Q_i}}) \leq \beta_i < p_i$.
\end{theorem}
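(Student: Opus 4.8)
The plan is to interpret ``same monotonicity'' as the requirement that, viewed as functions of the coalition capacity $e^i$, the utility $V_i(C_i)=R_i(e^i)-\alpha|C_i|Q_i/e^i$ increases exactly where $R_i$ increases and decreases exactly where $R_i$ decreases. Since $R_i$ rises on $(0,\beta_i]$ and falls on $(\beta_i,p_i)$, I would split the analysis at the breakpoint $\beta_i$ and examine the sign of $\mathrm{d}V_i(C_i)/\mathrm{d}e^i$ on each branch, treating $V_i$, $Q_i$, $\alpha$, $\beta_i$, $p_i$ and the coalition cardinality as fixed parameters.

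First, on the increasing branch $0<e^i\le\beta_i$, differentiating gives $\frac{V_i}{\beta_i}+\frac{\alpha|C_i|Q_i}{(e^i)^2}$, a sum of two strictly positive terms and hence always positive. So the utility is automatically increasing here, and this branch imposes no constraint on $\beta_i$; this disposes of half the monotonicity condition for free. The substance lies in the decreasing branch $\beta_i<e^i<p_i$, where $R_i$ has negative slope $V_i/(\beta_i-p_i)$, so I need $\frac{\mathrm{d}V_i(C_i)}{\mathrm{d}e^i}=\frac{V_i}{\beta_i-p_i}+\frac{\alpha|C_i|Q_i}{(e^i)^2}\le 0$ to hold for every $e^i$ in the interval.

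The key observation is that the first term is a fixed negative number while the second is positive and strictly decreasing in $e^i$, so the left-hand side is maximized as $e^i\to\beta_i^{+}$. Thus the whole condition collapses to the single inequality obtained at $e^i=\beta_i$, namely $\frac{\alpha|C_i|Q_i}{\beta_i^{2}}\le\frac{V_i}{p_i-\beta_i}$, which rearranges to the quadratic inequality $V_i\beta_i^{2}+\alpha|C_i|Q_i\,\beta_i-\alpha|C_i|Q_i\,p_i\ge 0$ (here I would record the $|C_i|=1$ normalization that matches the form in the statement).

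Finally, since this quadratic in $\beta_i$ has a positive leading coefficient and a negative constant term, its product of roots is negative, so it has exactly one positive root, and the inequality holds precisely for $\beta_i$ at least that root. Solving and then rationalizing the numerator — multiplying through by the conjugate $\sqrt{1+4V_ip_i/(\alpha Q_i)}+1$ so the leading $-\alpha Q_i$ cancels — collapses the root to the closed form $2p_i/(1+\sqrt{1+4V_ip_i/(\alpha Q_i)})$; combining this lower bound with the standing hypothesis $\beta_i<p_i$ yields the claimed two-sided bound. I expect the only delicate steps to be justifying that the binding constraint sits at the endpoint $e^i=\beta_i$ (so a pointwise sign condition reduces to a single inequality) and carrying out the conjugate-rationalization cleanly; the differentiation itself is routine.
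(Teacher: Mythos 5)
Your proposal is correct and follows essentially the same route as the paper: the increasing branch is automatic, the decreasing branch reduces to a single condition at $e^i=\beta_i$ (the paper phrases this as requiring the interior critical point $\sqrt{\alpha Q_i|C_i|(p_i-\beta_i)/V_i}$ to lie at or below $\beta_i$, which is algebraically identical to your endpoint derivative condition), and the same quadratic in $\beta_i$ is solved and rationalized to the stated closed form, with the $|C_i|\geq 1$ normalization handled the same way.
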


\begin{proof}
	From the model, we know that the coalition's utility function $V_i(C_i)$ for completing task $i$ is given by:
	\begin{eqnarray}
		V_i(C_i) = \begin{cases}
			&\frac{V_i}{\beta_i} \cdot e^i - \frac{\alpha Q_i |C_i|}{e^i}, 0<e^i\leq \beta_i \\
			&\frac{V_i}{\beta_i - p_i} \cdot (e^i - \beta_i) - \frac{\alpha Q_i |C_i|}{e^i}, \beta_i < e^i \leq p_i
		\end{cases}
	\end{eqnarray}
	
	If the coalition's utility function $V_i(C_i)$ and the task revenue function $R_i(e^i)$ exhibit the same monotonicity, then the coalition work capacity $e_i$ satisfies $V_i(C_i)$ increasing in the range $0<e^i \leq \beta_i$ and decreasing in the range $\beta_i < e^i \leq p_i$.
	
	\textbf{Case 1}: When $0 < e^i \leq \beta_i$, since $V_i/\beta_i>0$ and $\alpha Q_i |C_i|>0$, it follows that $V_i(C_i)$ is monotonically increasing.
	
	\textbf{Case 2}: When $\beta_i < e^i \leq p_i$, since $V_i/(\beta_i - p_i)<0$ and $\alpha Q_i |C_i|>0$, it follows that $V_i(C_i)$ is monotonically increasing within the range $\beta_i<e_i\leq \sqrt{\alpha Q_i|C_i|(p_i-\beta_i)/V_i}$ and decreasing within the range $\sqrt{\alpha Q_i|C_i|(p_i-\beta_i)/V_i} \leq e^i < p_i$. 
	
	If $V_i(C_i)$ is decreasing when $e^i>\beta_i$, then the coalition revenue threshold $\beta_i$ should meet $\beta_i\geq \sqrt{\alpha Q_i|C_i|(p_i-\beta_i)/V_i}$. Or, equivalently $\beta_i \geq 2p_i/(1+\sqrt{1+\frac{4V_ip_i}{\alpha Q_i |C_i|}})$.

	Given that the coalition revenue threshold $\beta_i$ does not exceed the maximum working capacity $p_i$ and $|C_i|\geq 1$, we can conclude $2p_i/(1+\sqrt{1+\frac{4V_ip_i}{\alpha Q_i}}) \leq \beta_i < p_i$.
\end{proof}
\vspace{-0.3cm}

Theorem 3.1 implies that, before coalition formation, each task $i$ evaluates its coalition revenue threshold $\beta_i$ based on its value $V_i$, workload $Q_i$, and maximum capacity $p_i$. This ensures that the coalition structure will meet the task requirements as much as possible.

\begin{theorem}
	\textbf{(Motivation Incentive Theorem).} When the efficiency $e_j^i$ of UAV $j$ falls within a certain controllable range, its motivation to join or leave coalition $C_i$ is stimulated.
\end{theorem}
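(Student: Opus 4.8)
The plan is to interpret a UAV's ``motivation'' as the sign of its marginal utility upon joining (or, symmetrically, leaving) coalition $C_i$, and to show that this sign is governed by whether $e_j^i$ lies in a bounded interval. I would begin by writing the marginal utility of a UAV $j \notin C_i$ with respect to $C_i$ as
$$\Delta_j = V_i(C_i \cup \{j\}) - V_i(C_i) = \bigl[R_i(e^i + e_j^i) - R_i(e^i)\bigr] - \alpha Q_i\left(\frac{|C_i|+1}{e^i + e_j^i} - \frac{|C_i|}{e^i}\right),$$
where $e^i = \sum_{k \in C_i} e_k^i$ is the current working capacity. The UAV is motivated to join when $\Delta_j > 0$, and an incumbent is motivated to leave when its marginal contribution is negative.

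Next, because $R_i$ is piecewise, I would split the analysis according to the branch into which the post-join capacity $e^i + e_j^i$ falls. By Theorem 3.1, under the admissible choice of $\beta_i$ the utility is unimodal in total capacity, rising up to $\beta_i$ and falling thereafter. In the regime $e^i + e_j^i \leq \beta_i$, substituting $R_i(e) = V_i e/\beta_i$ yields a revenue gain $V_i e_j^i/\beta_i$ that grows linearly in $e_j^i$; in the regime $e^i + e_j^i > \beta_i$ the surplus feeds into the decreasing branch $\frac{V_i}{\beta_i - p_i}(e - p_i)$. In each case I would collect the loss terms over the common denominator $e^i(e^i + e_j^i)$ and set $\Delta_j = 0$, which reduces to a quadratic in $e_j^i$.

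I would then read the behaviour off this expression. For $e_j^i$ too small, the extra head count inflates the loss $\alpha Q_i(|C_i|+1)/(e^i+e_j^i)$ faster than the meagre revenue gain, so $\Delta_j < 0$ and joining is not worthwhile; for an intermediate range the linear revenue gain dominates and $\Delta_j > 0$; and for $e_j^i$ so large that the coalition overshoots deep into the decreasing branch, $\Delta_j$ turns negative again. The positive roots of the two quadratics above therefore delimit the ``controllable range'' $[e_{\min}, e_{\max}]$ within which joining is beneficial, and by the same computation applied with the incumbent removed they characterise when a member is stimulated to leave.

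The main obstacle will be handling the threshold crossing cleanly: the pre-join and post-join capacities may sit in different branches of $R_i$, so the closed form of $\Delta_j$ changes exactly at $e^i + e_j^i = \beta_i$. I would verify continuity of $\Delta_j$ there and match each quadratic root to its correct branch, discarding spurious or negative roots, so that the resulting interval for $e_j^i$ is nonempty and respects the feasibility constraint $e^i + e_j^i < p_i$.
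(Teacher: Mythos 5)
Your proposal is correct and follows essentially the same route as the paper: both interpret motivation as the sign of the marginal utility $V_i(C_i\cup\{j\})-V_i(C_i)$ (resp.\ $V_i(C_i\setminus\{j\})-V_i(C_i)$), combine the revenue increment with the loss increment over the common denominator $e^i(e^i\pm e_j^i)$, and solve the resulting quadratic in $e_j^i$ to obtain the admissible interval. The only difference is that the paper sidesteps your ``threshold crossing'' concern entirely by restricting Case~1 (joining) to $e^i+e_j^i\leq\beta_i$ and Case~2 (leaving) to $e^i-e_j^i\geq\beta_i$, so each case lives on a single branch of $R_i$ and the branch-matching you anticipate is never needed.
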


\begin{proof}
	\textbf{Case 1}: When UAV $j$ joins coalition $C_i$, and the coalition's working capacity still does not exceed the coalition revenue threshold, i.e., $e^i+e_j^i\leq \beta_i$, the coalition utility difference is $V_i(C_i\cup\{j\})-V_i(C_i)=\frac{V_i e_j^i}{\beta_i}+\frac{\alpha Q_i(|C_i|e_j^i-e^i)}{e^i(e^i+e_j^i)}>0$. By solving the above formula, we obtain $e_j^i\geq \frac{2}{\sqrt{\lambda_1^2+\mu_1}+\lambda_1},\lambda_1=\frac{|C_i|}{e^i}+\frac{V_ie^i}{\alpha \beta_i Q_i},\mu_1=\frac{4V_i}{\alpha \beta_i Q_i}$. Therefore, when the efficiency $e_j^i$ falls within the interval $\Delta_1$:
	\begin{eqnarray}
		0<\frac{2}{\sqrt{\lambda_1^2+\mu_1}+\lambda_1} \leq e_j^i \leq \beta_i-e_j^i
	\end{eqnarray}

	UAV $j$ joining coalition $C_i$ leads to an increase in utility, that is, UAV $j$ has a positive contribution to coalition $C_i$ as $V_i(C_i\cup\{j\})-V_i(C_i)>0$. Thus, UAV $j$ has the motivation to join the coalition.
	
	\textbf{Case 2}: When UAV $j$ leaves coalition $C_i$, and the coalition's working capacity does not fall below the coalition revenue threshold, i.e., $e^i-e_j^i\geq \beta_i$, the coalition utility difference is $V_i(C_i\backslash\{j\})-V_i(C_i)=\frac{V_i e_j^i}{p_i-\beta_i}+\frac{\alpha Q_i(e^i-|C_i|e_j^i)}{e^i(e^i-e_j^i)}>0$. By solving the above formula, we obtain $e_j^i \!\leq \! \frac{2}{\lambda_2\!-\!\sqrt{\lambda_2^2+\mu_2}},\lambda_2\!=\!\frac{|C_i|}{e^i}\!-\!\frac{V^ie^i}{\alpha(p_i \!-\!\beta_i)Q_i},\mu_2=\frac{4V_i}{\alpha(p_i \!-\!\beta_i)Q_i}$. Therefore, when the efficiency $e_j^i$ falls within the interval $\Delta_2$:
	\begin{eqnarray}
		0<e_j^i \leq \min \{e^i-\beta_i,\frac{2}{\lambda_2-\sqrt{\lambda_2^2+\mu_2}}\}
	\end{eqnarray}
	UAV $j$ leaving coalition $C_i$ leads to an increase in utility, i.e., UAV $j$ has a negative contribution to coalition $C_i$ as $V_i(C_i\backslash\{j\})-V_i(C_i)>0$. Thus, UAV $j$ has the motivation to leave the coalition.
\end{proof}

Theorem 3.2 implied that when the efficiency $e_j^i$ of UAV $j$ falls within specific ranges, its motivation to join or leave coalition $C_i$ can be effectively stimulated. This helps to better understand the decision-making process of UAV coalition formation.


\section{Methodology}
\subsection{Basic Concepts}
Under the framework of CFG, a common practice for participants to increase their individual or overall utilities is to form coalitions through cooperation. We have transformed the model into a CFG with transferrable utility, where the utility can be distributed among coalition members. First, we introduce some basic concepts.

\begin{myDef}
	\textbf{(CFG)}\cite{ANDREAS2009A} A CFG can be denoted as a pair $(\mathcal{N}, (\succeq_j)_{j \in \mathcal{N}})$, where $\succeq_j$ is a complete weak preference relation over all possible coalition.
\end{myDef}

In CFG, each UAV decides whether to join or leave a coalition based on their preference order. For example, for any given two coalitions, $C_k$ and $C_l$, and UAV $j$, $C_k \succeq_j C_l$ indicates that UAV $j$ prefers to join $C_k$ rather than $C_l$, or UAV $j$ holds the same preference between the two coalitions. Furthermore, if $C_k \succ_j C_l$, it means that UAV $j$ strictly prefers $C_k$ to $C_l$. These preference orders determine the final structure of the coalition. We define the novel preference order based on Marginal Utility as follows:

\begin{myDef}
	\textbf{(Marginal Utility Order)} For any UAV $j \in \mathcal{N}$ and any two coalitions containing $j$, $C_k, C_l \subseteq \mathcal{N}$ ($k \neq l$), we have:
	\begin{eqnarray} \label{MU}
		\begin{split}
			C_k \succeq_j C_l \Leftrightarrow &u_j(C_k)+\sum_{g\in C_k\backslash\{j\}}[u_g(C_k)-u_g(C_k\backslash\{j\})]>\\
			&u_j(C_l)+\sum_{g\in C_l\backslash\{j\}}[u_g(C_l)-u_g(C_l\backslash\{j\})] 
		\end{split}
	\end{eqnarray}
\end{myDef}

Here, $u_j(C_k)$ and $u_j(C_l)$ represent the utility of UAV $j$ in coalitions $C_k$ and $C_l$, respectively, while $u_g(C_k\backslash\{j\})$ and $u_g(C_l\backslash\{j\})$ represent the utility of any UAV $g$ in coalitions $C_k$ and $C_l$ after removing UAV $j$. In the proposed order, participants consider the marginal utility on other participants when deciding to join a new coalition.

To ensure coalition stability, fairness is taken into consideration when distributing utility based on individual differences. The common absolute egalitarianism distribution method cannot meet the individual rationality. Therefore, we consider a coalition utility distribution based on Shapley value, which is defined as follows:

\begin{myDef}
	\textbf{(Shapley Value)}\cite{1953A} For any UAV $j \in C_i$ within the coalition $C_i$, the Shapley value $u_j(C_i)$ is defined as:
	\begin{eqnarray}
		u_j(C_i)=\!\!\!\!\!\sum_{C_i^\prime\subseteq C_i \backslash \{j\}}\!\!\!\!\frac{|C_i^\prime|!\texttimes(|C_i|-|C_i^{\prime}|-1)!}{|C_i|!}[V(C_i^{\prime}\cup\{j\})-V(C_i^{\prime})]
	\end{eqnarray}
\end{myDef}

Here, $C_i^{\prime}$ represents a sub-coalition of coalition $C_i$, $V(C_i^{\prime})$ represents the utility of sub-coalition $C_i^{\prime}$, and $V(C_i^{\prime}\cup{\{j\}})$ represents the utility when UAV $j$ joins the sub-coalition $C_i^{\prime}$. The Shapley value calculates the average contribution of a UAV in various possible coalitions, thus fairly distributing the utility.

\begin{myDef}
	\textbf{(Stable Coalition Partition)}\cite{2002The} A coalition partition $\mathcal{C}$ is considered stable when no participant can change the coalition structure (task selection) to increase their utility, that is
	\begin{eqnarray}
		u_j(s_j^*,\boldsymbol{s}_{-j}) \geq u_j(s_j,\boldsymbol{s}_{-j}),\forall j \in \mathcal{N},s_j \neq s_j^*
	\end{eqnarray}
\end{myDef}

To demonstrate the existence of a stable coalition structure in the proposed model, the utility function $U_j(s_j,\boldsymbol{s}_{-j})$ for UAV $j$ in the model, according to the marginal utility order, is denoted as:
\begin{eqnarray}
	U_j(s_j,\boldsymbol{s}_{-j})=u_j(s_j,\boldsymbol{s}_{-j})+\!\!\!\!\!\!\sum_{f\in C_{\hat{s_j}}\backslash\{j\}}\!\!\!\!\! u_f(s_j,\boldsymbol{s}_{-j})+\!\!\!\!\!\!\sum_{g\in C_{s_j}\backslash\{j\}}\!\!\!\!\! u_g(s_j,\boldsymbol{s}_{-j})
\end{eqnarray}

Here, $C_{s_j}$ represents the current coalition of UAV $j$, $\hat{s_j}$ represents the task previously chosen by UAV $j$, and $C_{\hat{s_j}}$ represents the previous coalition of UAV $j$.

\begin{theorem}
	Under the Marginal Utility preference order, the CFG has a Nash equilibrium solution.
\end{theorem}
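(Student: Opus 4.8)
The plan is to recognize the game as a potential game whose potential is the total social welfare, and then to use finiteness of the profile space to extract a pure Nash equilibrium. I would take $\Phi(\boldsymbol{s}) = \sum_{i \in \mathcal{M}} V_i(C_i)$ as the candidate potential, where the coalitions $C_i$ are those induced by $\boldsymbol{s}$. Since each UAV's strategy set is the finite task set $\mathcal{M}$, there are at most $M^N$ profiles and $\Phi$ attains a maximum; the crux is to prove that every profitable unilateral deviation under the Marginal Utility order strictly increases $\Phi$, so that a maximizer of $\Phi$ admits no profitable deviation and is therefore a stable partition in the sense of the stability definition, i.e. a Nash equilibrium.

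First I would simplify the Marginal Utility order using the efficiency (budget-balance) property of the Shapley value, namely $\sum_{g \in C} u_g(C) = V(C)$ and $\sum_{g \in C \backslash \{j\}} u_g(C \backslash \{j\}) = V(C \backslash \{j\})$, writing $V(\cdot)$ for the coalition-utility function so that $V(C_i) = V_i(C_i)$. Substituting these identities into the preference value of the Marginal Utility order collapses it to a marginal contribution, $u_j(C) + \sum_{g \in C \backslash \{j\}} [u_g(C) - u_g(C \backslash \{j\})] = V(C) - V(C \backslash \{j\})$. Hence UAV $j$ ranks any two coalitions (both containing $j$) purely by the marginal value it adds to each, and the utility $U_j$ used in the model reduces, via the same efficiency identity, to the sum of the values of the two coalitions affected by $j$'s presence.

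Next I would analyze a single deviation. Fixing $\boldsymbol{s}_{-j}$, when $j$ switches from its current coalition $A$ (with $j \in A$) to a new coalition $B \cup \{j\}$, every coalition other than these two is unchanged and cancels in the difference of $\Phi$, leaving $\Delta \Phi = [V(B \cup \{j\}) - V(B)] - [V(A) - V(A \backslash \{j\})]$. By the identity above this equals exactly the gap between $j$'s marginal contributions to the new and old coalitions, which is precisely the quantity the Marginal Utility order compares and precisely $U_j(b,\boldsymbol{s}_{-j}) - U_j(a,\boldsymbol{s}_{-j})$. Thus $j$ strictly prefers to move if and only if $\Delta \Phi > 0$, so $\Phi$ is an exact potential; improvement dynamics strictly raise the finite-valued $\Phi$, hence terminate, and the terminal profile (equivalently any global maximizer of $\Phi$) is a Nash equilibrium.

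The step I expect to be the main obstacle is the deviation bookkeeping: I must ensure the Shapley values entering $U_j$ are recomputed in the post-deviation coalitions, that the efficiency identity is applied coalition-by-coalition so that the non-involved terms genuinely cancel rather than merely agree in sign, and that the boundary cases are handled — in particular the model's requirement that coalitions be nonempty (the degenerate case where $j$ leaving would empty $A$) and the branch switch of the revenue function at $\beta_i$ and $p_i$. Once the telescoping to $\Delta\Phi$ is shown to be exact, the finiteness argument closes the proof immediately.
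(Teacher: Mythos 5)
Your proposal is correct and follows essentially the same route as the paper: both identify the game as an exact potential game whose potential is the total utility (by Shapley efficiency, your social-welfare potential $\sum_i V_i(C_i)$ coincides with the paper's $\sum_j u_j$), show that the change in the potential under a unilateral deviation equals the change in the Marginal-Utility-order value $U_j$ after the unaffected coalitions cancel, and invoke finiteness of the profile space to conclude existence of a pure Nash equilibrium. Your explicit use of the budget-balance identity to collapse the preference value to the marginal contribution $V(C)-V(C\backslash\{j\})$ is a cleaner presentation of the same term-by-term cancellation the paper carries out directly on the individual utilities.
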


\begin{proof}
	Based on the property that there exists at least one Nash equilibrium solution in exact potential games \cite{1996Potential}, we define the potential function as:
	\begin{eqnarray}
		\Phi(s_j,\boldsymbol{s}_{-j})=\sum_{j\in \mathcal{N}}u_j(s_j,\boldsymbol{s}_{-j})
	\end{eqnarray}
	
	Let's assume that UAV $j$ changes its task selection from $s_j$ to $\check{s_j}$ and joins a new coalition $C_{\check{s_j}}$. The change in utility function $U_j(s_j,\boldsymbol{s}_{-j})$ is given by:
	\begin{eqnarray}
		\begin{split}
			&U_j(\check{s_j},\boldsymbol{s}_{-j})-U_j(s_j,\boldsymbol{s}_{-j})\\&=u_j(\check{s_j},\boldsymbol{s}_{-j})-u_j(s_j,\boldsymbol{s}_{-j})\\
			&+\sum_{f\in C_{\hat{s_j}}}[u_f(s_f,\check{\boldsymbol{s}}_{-f})-u_f(s_f,\boldsymbol{s}_{-f})]\\&+\sum_{g\in C_{s_j}}[u_g(s_g,\check{\boldsymbol{s}}_{-g})-u_g(s_g,\boldsymbol{s}_{-g})]\\&+\sum_{h\in C_{\check{s_j}}}[u_h(s_h,\check{\boldsymbol{s}}_{-h})-u_h(s_h,\boldsymbol{s}_{-h})]
		\end{split}
	\end{eqnarray}
	
	Let $\Gamma_1 = C \backslash C_{\hat{s_j}} \backslash C_{s_j}$, and when UAV $j$ selects the task $s_j$, the potential function $\Phi(s_j, \boldsymbol{s}_{-j})$ is given by
	\begin{eqnarray}
		\begin{split}
			\Phi(s_j,\boldsymbol{s}_{-j})&=u_j(s_j,\boldsymbol{s}_{-j})+\sum_{f\in C_{\hat{s_j}}}u_f(s_f,\boldsymbol{s}_{-f})\\
			&+\sum_{g\in C_{s_j}}u_g(s_g,\boldsymbol{s}_{-g})+\sum_{h\in \Gamma_1 }u_h(s_h,\boldsymbol{s}_{-h})
		\end{split}
	\end{eqnarray}
	
	Let $\Gamma_2 = C \backslash C_{\hat{s_j}} \backslash C_{\check{s_j}}$, and when UAV $j$ selects the task $\check{s_j}$, the potential function $\Phi(\check{s_j}, \boldsymbol{s}_{-j})$ is given by
	\begin{eqnarray}
		\begin{split}
			\Phi(\check{s_j},\boldsymbol{s}_{-j})&=u_j(\check{s_j},\boldsymbol{s}_{-j})+\sum_{f\in 		C_{\hat{s_j}}}u_f(s_f,\check{\boldsymbol{s}}_{-f})\\
			&+\sum_{g\in C_{\check{s_j}}}u_g(s_g,\check{\boldsymbol{s}}_{-g})+\sum_{h\in \Gamma_2}u_h(s_h,\check{\boldsymbol{s}}_{-h})
		\end{split}	
	\end{eqnarray}
	
	Therefore, the difference in potential function $\Phi(s_j, \boldsymbol{s}_{-j})$ is
	\begin{eqnarray} \label{77}
		\begin{split}
			&\Phi(\check{s_j},s_{-j})-\Phi(s_j,s_{-j})\\&=u_j(\check{s_j},\boldsymbol{s}_{-j})-u_j(s_j,s_{-j})\\&+\sum_{f\in C_{\hat{s_j}}}[u_f(s_f,\check{\boldsymbol{s}}_{-f})-u_f(s_f,\boldsymbol{s}_{-f})]\\&+\sum_{g\in C_{\check{s_j}}}u_g(s_g,\check{\boldsymbol{s}}_{-g})-\!\!\!\sum_{g^{\prime}\in C_{s_j}}u_{g^{\prime}}(s_{g^{\prime}},\boldsymbol{s}_{-g^{\prime}})\\&+\sum_{h\in \Gamma_1}u_h(s_h,\check{\boldsymbol{s}}_{-h})-\sum_{h^\prime\in \Gamma_2}u_{h^\prime}(s_{h^\prime},\boldsymbol{s}_{-h^\prime})
		\end{split} 
	\end{eqnarray}
	
	Let $\Gamma = \Gamma_1 \cap \Gamma_2$, when UAV $j$ changes its task selection, it only affects the utility of UAVs within the coalitions $C_{\hat{s_j}}, C_{s_j}$, and $C_{\check{s_j}}$, and does not affect the utility of UAVs in other coalitions (i.e., the set $\Gamma$). Therefore, we can obtain
	\begin{eqnarray}\label{8}
		\sum_{h\in \Gamma}u_h(s_h,\check{\boldsymbol{s}}_{-h})=\sum_{h\in \Gamma}u_h(s_h,\boldsymbol{s}_{-h})
	\end{eqnarray}
	
	With this, we can simplify Equation~\ref{77} as follows:
	\begin{eqnarray}\label{dd}
		\begin{split}
			&\Phi(\check{s_j},\boldsymbol{s}_{-j})-\Phi(s_j,\boldsymbol{s}_{-j})\\&=u_j(\check{s_j},\boldsymbol{s}_{-j})-u_j(s_j,\boldsymbol{s}_{-j}) \\
			&+\sum_{f\in C_{\hat{s_j}}}[u_f(s_f,\check{\boldsymbol{s}}_{-f})- u_f(s_f,\boldsymbol{s}_{-f})]
			\\&+\sum_{g\in C_{s_j}}[u_g(s_g,\check{\boldsymbol{s}}_{-g})- u_g(s_g,\boldsymbol{s}_{-g})] \\ &+ \sum_{h\in C_{\check{s_j}}}[u_h(s_h,\check{\boldsymbol{s}}_{-h})- u_h(s_h,\boldsymbol{s}_{-h})]
			\\&=U_j(\check{s_j},\boldsymbol{s}_{-j})-U_j(s_j,\boldsymbol{s}_{-j})
		\end{split}
	\end{eqnarray}
	
	Therefore, Equation~\ref{dd} satisfies the definition of an exact potential game, and $\Phi(s_j,s_{-j})$ serves as the potential function for this game. As a result, the CFG has a Nash equilibrium solution.
\end{proof}

In potential games, each UAV's change in utility due to strategy variation under the Marginal Utility preference order is reflected equally in the potential function. Under this motivation, the individual objectives of the UAVs align with the global objective.

\subsection{Multi-UAV Coalition Formation Algorithm Based on Marginal Utility}
To achieve a stable coalition partition, we propose a MUCFC-CFG algorithm, as shown in Algorithm 1. The main idea of this algorithm is based on the property of finite improvement to reach a Nash equilibrium in potential games\cite{2016Potential}. In the initialization phase, takes the attributes of each UAV and task as input, and each task is set with its coalition revenue threshold (Lines 1-4). Meanwhile, each UAV randomly selects tasks and is assigned to $M$ coalitions (Lines 5-8). In the computation phase (Lines 9-13), each coalition calculates its utility, and Algorithm 2 is used to compute the utility values assigned to each UAV within the coalition.

During the iteration process (Lines 14-29), a UAV is randomly selected to change its strategy. If a UAV is selected, it attempts to choose to join another coalition, excluding the current one. Algorithm 3 is used to calculate the utility function values for the two coalitions, and by comparing these values, the preference is determined. Consequently, the corresponding strategy is updated, and the task selection is modified. Other UAVs maintain their strategies from the previous iteration. This process continues for a finite number of steps, and the algorithm converges to a stable solution through iterative optimization.

\begin{algorithm}[h]
	\caption{MUCFC-CFG Algorithm}
	\LinesNumbered
	\KwIn{Task set $\mathcal{M}$, UAV set $\mathcal{N}$}
	\KwOut{Coalitions of UAVs $C_i$ for executing each task}
	\ForEach{Task $i \in \mathcal{M}$}{
		Task $i$'s coalition revenue threshold: $\beta_i \geq 2p_i/(1+\sqrt{1+4V_ip_i/\alpha Q_i})$ \\
		Initialize coalition: $C_i\leftarrow \emptyset$ \\
	}
	\ForEach{UAV $j \in \mathcal{N}$}{
		$i\leftarrow \text{random}(1, M)$ \\
		$C_i\leftarrow C_i\cup\{j\}$, $s_j\leftarrow i$ \\
	}
	\ForEach{Task $i \in \mathcal{M}$}{
		Calculate $V_i(C_i)$ of $C_i$ using Equation \ref{6} \\
		Calculate $u_i(C_i)$ of each UAV in $C_i$ using \textbf{Algorithm 2} \\
	}
	Current coalition partition $C_{cur}\leftarrow C_{ini}$ \\
	\Repeat{Coalition partition $\mathcal{C}$ converges to a stable partition}{
		Randomly select a UAV $j, j \in \mathcal{N}$ \\
		UAV $j$ calculates $U_j(s_j)$ using \textbf{Algorithm 3} \\
		UAV $j$ randomly selects a coalition $C_{s_j^\prime},s_j^\prime \in \mathcal{M}\backslash\{s_j\}$ \\
		\eIf{$e_j^{s_j^\prime} \in \Delta_1$ and $e_j^{s_j}\in \Delta_2$}{
			Calculate the current $U_j(s_j^\prime)$ \\
			\If{$U_j(s_j^\prime) > U_j(s_j)$ (i.e., $C_{s_j^\prime} \succ_j C_{s_j}$)}
			{$C_{cur}\leftarrow C_{cur}\backslash C_{s_j}\backslash C_{s_j^\prime}$ \\
				$C_{s_j} \leftarrow C_{s_j}\backslash\{j\}$, $C_{s_j^\prime} \leftarrow C_{s_j^\prime}\cup\{j\}$ \\
				$C_{cur}\leftarrow C_{cur}\cup C_{s_j} \cup C_{s_j^\prime}$, $s_j\leftarrow s_j^\prime$ \\
			}
		}
		{
			UAV $j$ reselects a coalition $C_{s_j^\prime},s_j^\prime \in \mathcal{M}\backslash\{s_j\}$ \\
		}
	}
	\Return $C_i, i \in \mathcal{M}$
\end{algorithm}

\subsection{A Coalition Utility Allocation Algorithm Based on Shapley Values}
To achieve fair utility allocation within a coalition, we propose a Coalition Utility Allocation Algorithm based on Shapley Values (CUAA-SV), as shown in Algorithm 2. In this algorithm, the utility values for each UAV in coalition $C_i$ are initialized to zero (Line 1). In each iteration, one UAV is selected, and its contribution to the coalition's utility is calculated (Lines 2-12). All possible subsets $C_i^\prime$ of coalition $C_i$ are traversed, and if the currently selected UAV is not in the subset, the algorithm continues to the next subset. Otherwise, the weight $\omega$ of the Shapley value is calculated, taking into account the number of UAVs in the subset and the number of UAVs not in the subset (Line 8). The utility value of the selected UAV is updated according to Equation \ref{6}, which describes the change in utility when a specific UAV joins or leaves the coalition (Line 9).

\begin{algorithm}[h]
	\caption{CUAA-SV Algorithm}
	\LinesNumbered
	\KwIn{Coalition $C_i$ and parameters for task $i$: $V_i$, $Q_i$, $p_i$, $\beta_i$}
	\KwOut{Utility $u_j(C_i)$ for each UAV $j \in C_i$ in the $C_i$}
	Initialization: $u_j(C_j)\leftarrow 0$ for all $j \in C_i$ \\
	\ForEach{$j \in C_i$}{
		\For{$s=1$ \textbf{to} $|C_i|$}{
			\ForEach{$C_i^\prime \in \text{combinations}(C_i, s)$}{
				\If{$j \notin C_i^\prime$}
				{
					Continue \\
				}
				$\omega \leftarrow (|C_i^\prime|-1)!·(|C_i|-|C_i^\prime|)!/|C_i|!$ \\
				$u_j(C_i) \leftarrow u_j(C_i) + \omega \cdot (V_i(C_i) - V_i(C_i \setminus \{j\}))$ \\
	}}}
	\Return $u_j(C_i)$
\end{algorithm}

Through this iterative process, the algorithm calculates the utility that each UAV in the coalition $C_i$ should receive, ensuring that each member receives a fair utility based on their contribution to the overall task. This provides coalition members with a fair, acceptable and motivating way of distributing utility.

\subsection{A Coalition Value Evaluation Algorithm Based on Marginal Utility Order}

To facilitate coalition switching, we propose a Coalition Value Evaluation Algorithm based on Marginal Utility Order (CVEA-MUO), as shown in Algorithm 3. This algorithm is designed for a given UAV $j$ and two different coalitions, $C_{\hat{s_j}}$ and $C_{s_j}$. Initially, variables $u_1$, $u_2$, and $u_3$ are set to zero (Line 1). Here, $u_1$ represents the utility of UAV $j$ in the current coalition. By iterating through the members of coalitions $C_{\hat{s_j}}$ and $C_{s_j}$ separately, their utility values are accumulated into $u_2$ and $u_3$ (Lines 2-10). Finally, $u_1, u_2$, and $u_3$ are summed to obtain the overall utility function value $U_j(s_j, \boldsymbol{s}_{-j})$ for UAV $j$ considering different coalition (Line 11). This value is used to assist UAV $j$ in making decisions when switching coalitions.

This algorithm provides an assessment basis for decision-making under different coalition, taking into account both the utility of the individual UAV and its cooperative effects with other members.


\begin{algorithm}[h]
	\caption{CVEA-MUO Algorithm}
	\LinesNumbered
	\KwIn{Coalition $C_{\hat{s_j}}$ and $C_{s_j}$ for UAV $j$}
	\KwOut{Utility function value $U_j(s_j,s_{-j})$ for UAV $j$}
	Initialization: $u_1,u_2,u_3\leftarrow 0$, $u_1\leftarrow u_j(s_j,s_{-j})$ \\
	\ForEach{$f \in C_{\hat{s_j}}$}{
		$u_2\leftarrow u_2+u_f(C_{\hat{s_j}})$ \\
	}
	\ForEach{$g {\in {C_{s_j}}}$ }
        {
		\If{$g=f$}
		{Continue
		}
		$u_3\leftarrow u_3+u_g(C_{s_j})$ \\}
	$U_j(s_j,s_{-j})\leftarrow u_1+u_2+u_3$ \\
	\Return $U_j(s_j,s_{-j})$
\end{algorithm}

\subsection{Convergence Proof}

In this section, we provide a convergence proof for the algorithm.

\begin{figure*}[h]
	\centering
	\includegraphics[width=0.9\linewidth]{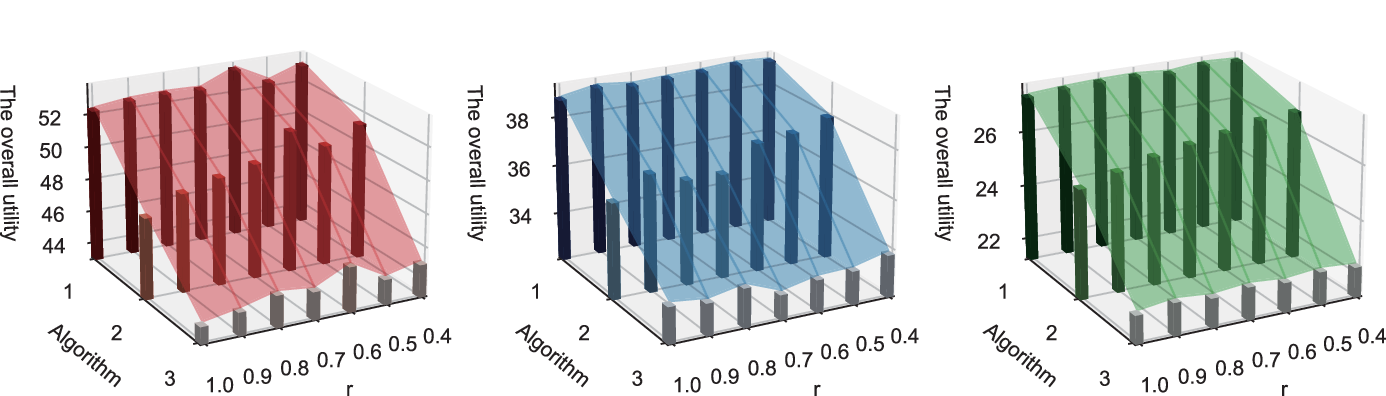}
	\caption{The variation of coalition overall utility with parameter $r$ under three scenarios.}
	\label{fig3}
\end{figure*}

\begin{theorem}
	The MUCFC-CFG algorithm converges.
\end{theorem}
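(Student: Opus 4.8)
The plan is to exploit the exact potential game structure already established for the Marginal Utility order and to invoke the finite improvement property of potential games \cite{2016Potential}. Recall that Theorem 3.3 exhibits the potential function $\Phi(s_j,\boldsymbol{s}_{-j})=\sum_{j\in\mathcal{N}}u_j(s_j,\boldsymbol{s}_{-j})$ and proves the exact-potential identity $\Phi(\check{s_j},\boldsymbol{s}_{-j})-\Phi(s_j,\boldsymbol{s}_{-j})=U_j(\check{s_j},\boldsymbol{s}_{-j})-U_j(s_j,\boldsymbol{s}_{-j})$ for any unilateral change of task selection by UAV $j$. The convergence argument will show that the algorithm realizes a sequence of strict improvement steps, each of which strictly increases $\Phi$, and that $\Phi$ ranges over a finite set, so only finitely many such steps can occur.

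First I would observe that every iteration of the Repeat loop in which an actual coalition switch is executed corresponds, by the update condition of the algorithm, to $U_j(s_j^\prime)>U_j(s_j)$, i.e. $C_{s_j^\prime}\succ_j C_{s_j}$ under the Marginal Utility order. Applying the potential identity of Theorem 3.3 to this switch gives $\Phi(s_j^\prime,\boldsymbol{s}_{-j})-\Phi(s_j,\boldsymbol{s}_{-j})=U_j(s_j^\prime,\boldsymbol{s}_{-j})-U_j(s_j,\boldsymbol{s}_{-j})>0$, so each executed switch strictly raises the value of the potential. Next I would bound the number of switches: the strategy profile $\boldsymbol{s}=(s_1,\dots,s_N)$ ranges over the finite set $\mathcal{M}^N$, so $\Phi$ attains at most $M^N$ distinct values and is in particular bounded above. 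Because $\Phi$ increases strictly along the sequence of executed switches yet can take only finitely many values, this sequence must terminate after finitely many steps. When no further profitable switch exists, every UAV satisfies $u_j(s_j^*,\boldsymbol{s}_{-j})\geq u_j(s_j,\boldsymbol{s}_{-j})$ for all alternative tasks, which is exactly the Stable Coalition Partition / Nash equilibrium condition of Definition 4, so the loop exits.

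The main obstacle I anticipate is handling the iterations that do \emph{not} execute a switch: those in which the sampled efficiencies fall outside $\Delta_1$ or $\Delta_2$, or in which $U_j(s_j^\prime)\le U_j(s_j)$. These idle iterations leave $\Phi$ unchanged, so the finite-improvement argument directly bounds only the number of \emph{productive} switches, not the number of random draws. I would therefore phrase the argument at the level of coalition partitions rather than raw loop counters: the partition changes only on a productive switch, and $\Phi$ strictly increases on each such change, so the reachable partitions form a finite, strictly-$\Phi$-increasing chain. Consequently a stable partition is reached in finitely many productive steps, after which the ``until converges'' test succeeds and the algorithm halts.
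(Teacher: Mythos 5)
Your proof is correct and follows essentially the same route as the paper's: both invoke the exact-potential identity $\Phi(\check{s_j},\boldsymbol{s}_{-j})-\Phi(s_j,\boldsymbol{s}_{-j})=U_j(\check{s_j},\boldsymbol{s}_{-j})-U_j(s_j,\boldsymbol{s}_{-j})$ to conclude that every executed switch strictly increases the potential, and then appeal to the finite improvement property of potential games over the finite strategy space $\mathcal{M}^N$. Your treatment is in fact somewhat more careful than the paper's, since you explicitly separate productive switches from idle iterations of the random-selection loop, but the underlying argument is the same.
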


\begin{proof}
	According to Definition 4.2, each UAV selects a better coalition that improves the utility function value. Therefore, any UAV updates its task selection only when the sum of the utilities within both the original and new coalitions increase. Since we have proven in Theorem 4.5 that the difference in the utility function of UAV is equal to the difference in the potential function under the marginal utility order. Consequently, the potential function increases accordingly. Note that since the number of UAVs and tasks is finite, according to the finite-increase property of potential games \cite{2016Potential}, any unilateral improvement sequence will converge to a Nash equilibrium in a finite number of iterations.
\end{proof}

\section{Experimental Results}

\subsection{Experimental Setup}

In this section, we validate the effectiveness of the proposed algorithm through experiments. The experimental code is implemented in Python\footnote{Our code is publicly available at \url{https://github.com/Agentyzu/MUCFC-CFG}, and considering pages limit, additional experiments about algorithmic robustness in large-scale scenarios are also illustrated on the Github.}. The computer's processor and memory parameters are as follows: AMD Ryzen 7 5800H with Radeon Graphics (16 CPUs) and 16GB RAM. Unless otherwise specified, the parameters for the experimental simulations are set as shown in Table~\ref{tab2}.

To better simulate real-world scenarios, we introduce the following design elements in the experimental setup:

(1) Correlation between fixed flight cost $\alpha$ and task value $V_i$: We set the $\alpha$ to be between 0.4\% and 1\% of the $V_i$ (denoted as $r=\alpha/V_i$). This considers the balance between cost and task value.

(2) Correlation between workload $Q_i$ and value $V_i$: We set the $Q_i$ to be between 1 and 1.5 of the $V_i$ (denoted as $\xi=Q_i/V_i$). Specifically, as the task value $V_i$ increases, the workload $Q_i$ also tends to increase. This because high-value tasks typically require more workload to complete.

(3) Number of UAVs and tasks: We consider scenarios of different scales, ranging from 4 to 20 UAVs, while ensuring that the number of tasks $M$ remains less than the number of UAVs to maintain the feasibility of the model.

\begin{table}[t]
	\caption{Simulation Parameters}
	\label{tab2}
	\begin{tabular}{rl}\toprule
		\textit{Parameters} & \textit{Value}\\ \midrule
	Number of tasks & $M\in[4,20]$\\
	Number of UAVs & $N\in[4,20]$\\
	Value of task $i$ & $V_i\in[5,10]$ \\
	Workload of task $i$ & $Q_i=\xi V_i$\\
	Maximum capacity of task $i$ & $p_i\in[5,6]$\\ 
	Fixed flight loss for UAVs & $\alpha=rV_i$\\ \bottomrule
\end{tabular}
\end{table}

(4) Distribution of UAV efficiency: We consider the distribution of UAV efficiency is centered around the average coalition work efficiency. This because UAV performance typically exhibits variation within a specific range.

\subsection{Experimental Results}

To demonstrate the advantages of the proposed algorithm, we compare it with Selfish and Pareto Order algorithm. Our experimental results are based on 500 rounds of randomly generated scenarios, and we take the average of the results. 

\subsubsection{Algorithm Performance}

As shown in Figure~\ref{fig3}, we compare the overall utility under three scenarios with varying numbers of UAVs and tasks ($N=20,15,10$ and $M=15,10,5$) and the different values of the parameter $r$. Under different values of $r$, our algorithm outperforms the other two in terms of the overall utility of the coalition. This is because our algorithm considers the marginal utility of each UAV through Marginal Utility Order, emphasizing the participation value of each UAV and ensuring that the inclusion of each UAV in the coalition leads to the maximum increase in overall utility of coalitions.

Additionally, our algorithm exhibits superior adaptability to variations in the parameter $r$, enabling it to maintain a relatively stable utility across different values of $r$. The other two algorithms may be more sensitive to $r$, leading to significant fluctuations in utility under different $r$. We choose to set $r$ to 0.6 and 1 in the following experiments for comparative analysis.


\begin{figure}[h]
\centering
\includegraphics[width=1\linewidth]{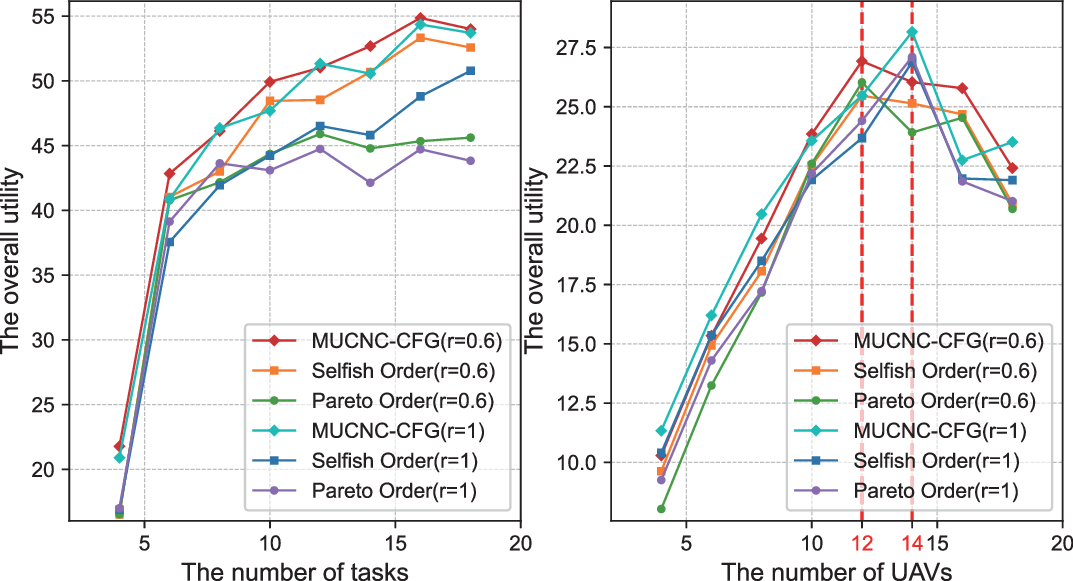}
\caption{The variation of coalition overall utility with task and UAV numbers for $N=20$ and $M=5$.}
\label{fig4}
\vspace{-0.4cm}
\end{figure}

As shown in Figure~\ref{fig4}, we compare the overall utility when the number of UAVs is $N=20$. The experimental results indicate that, when the number of UAVs is fixed, the overall utility of the coalition increases with the number of tasks increases. Meanwhile, the overall utility of the proposed algorithm and the Selfish Order grow faster as the number of tasks increases, while the utility of the Pareto Order grow lower. This is because the Selfish Order may decrease the utility of other UAVs to pursue higher individual utilities, while the Pareto Order restricts UAVs from leaving the coalition to obtain higher utility due to its strong constraints.

Additionally, we compare the overall utility when the number of tasks is $M=4$. The experimental results indicate that, when the number of tasks is fixed, the differences in utility among the three algorithms are not significant. When $r=0.6$ and $N<14$, the coalition's utility increases with the increase in the number of UAVs, reflecting the advantage of cooperation, reaching its maximum value at $N=14$. However, when more UAVs are added, the overall utility begins to decrease, indicating that the cost of cooperation outweighs the revenue. This is because, after a certain point, adding more UAVs leads to resource dispersion, and UAVs compete for tasks, thereby reducing overall utility. Therefore, when choosing the number of UAVs, a balance between cost and revenues is needed to find the most suitable number of UAVs for a specific number of tasks. It's worth noting that a larger $r$ reflects that UAVs will consume more energy. Hence, as the parameter $r$ increases, the energy consumption of UAVs also rise, resulting in a higher number of UAVs corresponding to the peak.

\begin{figure}[h]
\centering
\includegraphics[width=1\linewidth]{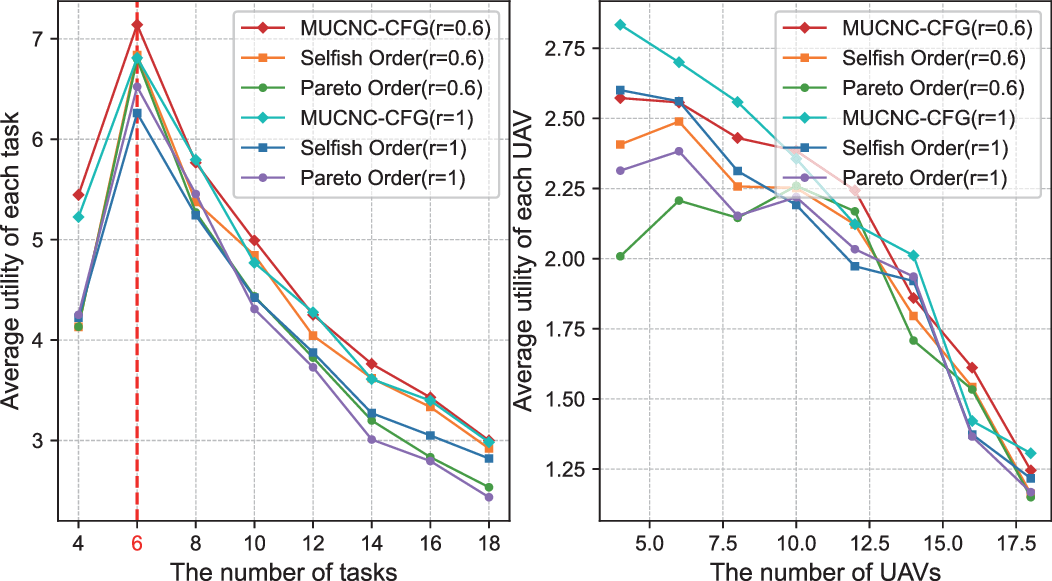}
\caption{The variation of average utility of each task and UAV with the task and UAV numbers for $N=20$ and $M=5$.}
\label{fig5}
\vspace{-0.4cm}
\end{figure}

In Figure~\ref{fig5}, we compare the average utility of each task when the number of UAVs is $N=20$. The experimental results show that, when the number of UAVs is fixed, the average utility of each task increases as the number of tasks increases and reaches its maximum at $N=6$, after which it decreases. This is because when the number of tasks slightly increases, UAVs cooperate to increase the utility of tasks. However, when the number of tasks increases significantly, UAVs may complete tasks individually, which reduces the average utility of each task.

Additionally, we compare the average utility of each UAV when the number of tasks is $M=4$. The experimental results indicate that, when the number of tasks is fixed, the average utility of each UAV decreases as the number of UAVs increases. When the number of UAVs is lower, our algorithm outperforms the other two in terms of the average utility of each UAV. Conversely, with higher UAV numbers, the three algorithms show no significant differences.

\subsubsection{Algorithm Convergence}

To verify the convergence of the algorithm, we studied the relationship between coalition overall utility and the number of iterations, as shown in Figure~\ref{fig6}. The experimental results indicate that our algorithm exhibits convergence during iterations, indicating that the coalition can reach a stable state. Although our algorithm does not have an advantage in terms of convergence speed compared to the other two algorithms, it eventually achieves higher overall utility. This optimization requires more iterations to find the best solution, but it ultimately results in higher overall utility.
\vspace{-0.2cm}
\begin{figure}[h]
\centering
\includegraphics[width=0.58\linewidth]{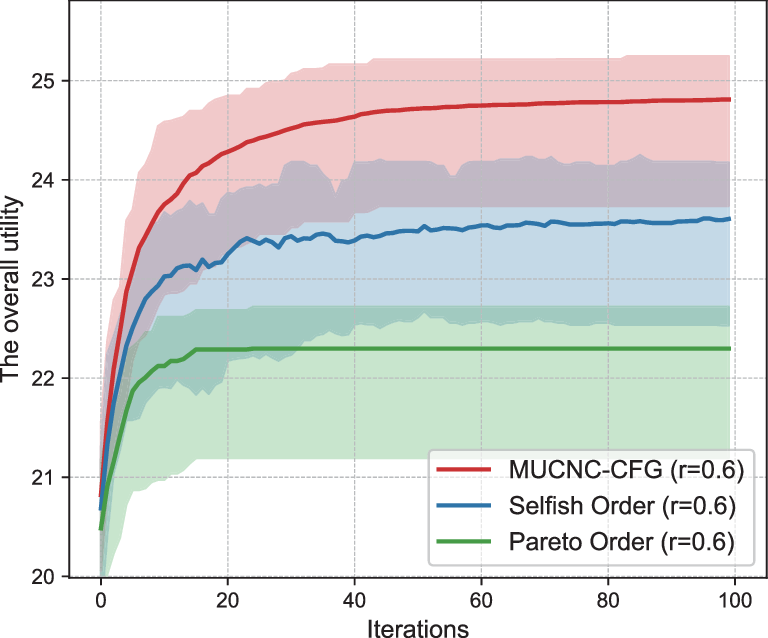}
\caption{The schematic diagram of the convergence of the three algorithms with $M=5$, $N=10$ and $r=0.6$.}
\label{fig6}
\end{figure}
\vspace{-0.4cm}

\section{Conlusion}
In this paper, in order to solve the tasks-driven multi-UAV coalition formation problem, we proposed a novel multi-UAV coalition for collaborative task completion model. Specifically, a revenue function based on coalition revenue threshold is firstly designed, then we used the Shapley value to distribute the utility of UAVs within the coalition based on the proposed model. Meanwhile, we designed a new Marginal Utility preference order based on this model, and proved that the CFG under this order has a Nash equilibrium solution. Finally, we proposed the MUCNC-CFG algorithm, which was able to achieve stable coalition structure within a limited number of iterations. Simulation results showed that the proposed algorithm can improve the overall utility of the coalition.




\begin{acks}
This work was supported by the National Natural Science Foundation
of China (No.61872313); the Special Innovation Fund
for Medical Innovation and Transformation – Clinical Translational
Research Project of Yangzhou University (No.AHYZUZHXM202103);
the Science and Technology on Near-Surface Detection Laboratory
(No.6142414220509); the Innovation and Entrepreneurship Training Program for college students in Jiangsu Province \seqsplit{(No.202311117059Z)}; the Startup Foundation for Introducing Talent of NUIST (No.2023r061). We thank several anonymous reviewers for their feedback on an earlier draft of this paper.
\end{acks}


\balance
\bibliographystyle{ACM-Reference-Format} 
\bibliography{sample}


\begin{thebibliography}{30}


\ifx \showCODEN    \undefined \def \showCODEN     #1{\unskip}     \fi
\ifx \showDOI      \undefined \def \showDOI       #1{#1}\fi
\ifx \showISBNx    \undefined \def \showISBNx     #1{\unskip}     \fi
\ifx \showISBNxiii \undefined \def \showISBNxiii  #1{\unskip}     \fi
\ifx \showISSN     \undefined \def \showISSN      #1{\unskip}     \fi
\ifx \showLCCN     \undefined \def \showLCCN      #1{\unskip}     \fi
\ifx \shownote     \undefined \def \shownote      #1{#1}          \fi
\ifx \showarticletitle \undefined \def \showarticletitle #1{#1}   \fi
\ifx \showURL      \undefined \def \showURL       {\relax}        \fi
\providecommand\bibfield[2]{#2}
\providecommand\bibinfo[2]{#2}
\providecommand\natexlab[1]{#1}
\providecommand\showeprint[2][]{arXiv:#2}

\bibitem[\protect\citeauthoryear{Apt and Witzel}{Apt and Witzel}{2009}]%
        {ANDREAS2009A}
\bibfield{author}{\bibinfo{person}{Krzysztof~R Apt} {and} \bibinfo{person}{Andreas Witzel}.} \bibinfo{year}{2009}\natexlab{}.
\newblock \showarticletitle{A generic approach to coalition formation}.
\newblock \bibinfo{journal}{\emph{International game theory review}} \bibinfo{volume}{11}, \bibinfo{number}{03} (\bibinfo{year}{2009}), \bibinfo{pages}{347--367}.
\newblock


\bibitem[\protect\citeauthoryear{Asheralieva and Niyato}{Asheralieva and Niyato}{2019}]%
        {asheralieva2019hierarchical}
\bibfield{author}{\bibinfo{person}{Alia Asheralieva} {and} \bibinfo{person}{Dusit Niyato}.} \bibinfo{year}{2019}\natexlab{}.
\newblock \showarticletitle{Hierarchical game-theoretic and reinforcement learning framework for computational offloading in UAV-enabled mobile edge computing networks with multiple service providers}.
\newblock \bibinfo{journal}{\emph{IEEE Internet of Things Journal}} \bibinfo{volume}{6}, \bibinfo{number}{5} (\bibinfo{year}{2019}), \bibinfo{pages}{8753--8769}.
\newblock


\bibitem[\protect\citeauthoryear{Barrot and Yokoo}{Barrot and Yokoo}{2019}]%
        {barrot2019stable}
\bibfield{author}{\bibinfo{person}{Nathana{\"e}l Barrot} {and} \bibinfo{person}{Makoto Yokoo}.} \bibinfo{year}{2019}\natexlab{}.
\newblock \showarticletitle{Stable and Envy-free Partitions in Hedonic Games.}. In \bibinfo{booktitle}{\emph{Proceedings of the International Joint Conference on Artificial Intelligence (IJCAI)}}. \bibinfo{pages}{67--73}.
\newblock


\bibitem[\protect\citeauthoryear{Boehmer, Bullinger, and Kerkmann}{Boehmer et~al\mbox{.}}{2023}]%
        {boehmer2023causes}
\bibfield{author}{\bibinfo{person}{Niclas Boehmer}, \bibinfo{person}{Martin Bullinger}, {and} \bibinfo{person}{Anna~Maria Kerkmann}.} \bibinfo{year}{2023}\natexlab{}.
\newblock \showarticletitle{Causes of stability in dynamic coalition formation}. In \bibinfo{booktitle}{\emph{Proceedings of the AAAI Conference on Artificial Intelligence}}, Vol.~\bibinfo{volume}{37}. \bibinfo{pages}{5499--5506}.
\newblock


\bibitem[\protect\citeauthoryear{Bogomolnaia and Jackson}{Bogomolnaia and Jackson}{2002}]%
        {2002The}
\bibfield{author}{\bibinfo{person}{Anna Bogomolnaia} {and} \bibinfo{person}{Matthew~O. Jackson}.} \bibinfo{year}{2002}\natexlab{}.
\newblock \showarticletitle{The Stability of Hedonic Coalition Structures}.
\newblock \bibinfo{journal}{\emph{Games \& Economic Behavior}} \bibinfo{volume}{38}, \bibinfo{number}{2} (\bibinfo{year}{2002}), \bibinfo{pages}{201--230}.
\newblock


\bibitem[\protect\citeauthoryear{Chalkiadakis, Elkind, and Wooldridge}{Chalkiadakis et~al\mbox{.}}{2022}]%
        {chalkiadakis2022computational}
\bibfield{author}{\bibinfo{person}{Georgios Chalkiadakis}, \bibinfo{person}{Edith Elkind}, {and} \bibinfo{person}{Michael Wooldridge}.} \bibinfo{year}{2022}\natexlab{}.
\newblock \bibinfo{booktitle}{\emph{Computational aspects of cooperative game theory}}.
\newblock \bibinfo{publisher}{Springer Nature}.
\newblock


\bibitem[\protect\citeauthoryear{Chen, Wu, Xu, Qi, Guan, Zhang, and Xue}{Chen et~al\mbox{.}}{2020}]%
        {2020Joint}
\bibfield{author}{\bibinfo{person}{Jiaxin Chen}, \bibinfo{person}{Qihui Wu}, \bibinfo{person}{Yuhua Xu}, \bibinfo{person}{Nan Qi}, \bibinfo{person}{Xin Guan}, \bibinfo{person}{Yuli Zhang}, {and} \bibinfo{person}{Zhen Xue}.} \bibinfo{year}{2020}\natexlab{}.
\newblock \showarticletitle{Joint task assignment and spectrum allocation in heterogeneous UAV communication networks: A coalition formation game-theoretic approach}.
\newblock \bibinfo{journal}{\emph{IEEE Transactions on Wireless Communications}} \bibinfo{volume}{20}, \bibinfo{number}{1} (\bibinfo{year}{2020}), \bibinfo{pages}{440--452}.
\newblock


\bibitem[\protect\citeauthoryear{Elkind, Fanelli, and Flammini}{Elkind et~al\mbox{.}}{2016}]%
        {elkind2016price}
\bibfield{author}{\bibinfo{person}{Edith Elkind}, \bibinfo{person}{Angelo Fanelli}, {and} \bibinfo{person}{Michele Flammini}.} \bibinfo{year}{2016}\natexlab{}.
\newblock \showarticletitle{Price of Pareto optimality in hedonic games}. In \bibinfo{booktitle}{\emph{Proceedings of the AAAI Conference on Artificial Intelligence}}, Vol.~\bibinfo{volume}{30}.
\newblock


\bibitem[\protect\citeauthoryear{Gupta, Jain, and Vaszkun}{Gupta et~al\mbox{.}}{2015}]%
        {gupta2015survey}
\bibfield{author}{\bibinfo{person}{Lav Gupta}, \bibinfo{person}{Raj Jain}, {and} \bibinfo{person}{Gabor Vaszkun}.} \bibinfo{year}{2015}\natexlab{}.
\newblock \showarticletitle{Survey of important issues in UAV communication networks}.
\newblock \bibinfo{journal}{\emph{IEEE communications surveys \& tutorials}} \bibinfo{volume}{18}, \bibinfo{number}{2} (\bibinfo{year}{2015}), \bibinfo{pages}{1123--1152}.
\newblock


\bibitem[\protect\citeauthoryear{Han, Yang, Bilal, Wang, Krichen, Alsadhan, and Ge}{Han et~al\mbox{.}}{2023}]%
        {han2023smart}
\bibfield{author}{\bibinfo{person}{Zhaoyang Han}, \bibinfo{person}{Yaoqi Yang}, \bibinfo{person}{Muhammad Bilal}, \bibinfo{person}{Weizheng Wang}, \bibinfo{person}{Moez Krichen}, \bibinfo{person}{Abeer~Abdullah Alsadhan}, {and} \bibinfo{person}{Chunpeng Ge}.} \bibinfo{year}{2023}\natexlab{}.
\newblock \showarticletitle{Smart Optimization Solution for Channel Access Attack Defense under UAV-aided Heterogeneous Network}.
\newblock \bibinfo{journal}{\emph{IEEE Internet of Things Journal}} (\bibinfo{year}{2023}).
\newblock


\bibitem[\protect\citeauthoryear{Huang, Qi, Huang, Jia, Wu, Yao, and Wang}{Huang et~al\mbox{.}}{2022}]%
        {huang2022connectivity}
\bibfield{author}{\bibinfo{person}{Yeting Huang}, \bibinfo{person}{Nan Qi}, \bibinfo{person}{Zanqi Huang}, \bibinfo{person}{Luliang Jia}, \bibinfo{person}{Qihui Wu}, \bibinfo{person}{Rugui Yao}, {and} \bibinfo{person}{Wenjing Wang}.} \bibinfo{year}{2022}\natexlab{}.
\newblock \showarticletitle{Connectivity guarantee within UAV cluster: A graph coalition formation game approach}.
\newblock \bibinfo{journal}{\emph{IEEE Open Journal of the Communications Society}}  \bibinfo{volume}{4} (\bibinfo{year}{2022}), \bibinfo{pages}{79--90}.
\newblock


\bibitem[\protect\citeauthoryear{L{\~a}, Chew, and Soong}{L{\~a} et~al\mbox{.}}{2016}]%
        {2016Potential}
\bibfield{author}{\bibinfo{person}{Quang~Duy L{\~a}}, \bibinfo{person}{Yong~Huat Chew}, {and} \bibinfo{person}{Boon-Hee Soong}.} \bibinfo{year}{2016}\natexlab{}.
\newblock \showarticletitle{Potential game theory}.
\newblock \bibinfo{journal}{\emph{Cham: Springer International Publishing}} (\bibinfo{year}{2016}).
\newblock


\bibitem[\protect\citeauthoryear{Lucchetti, Moretti, and Rea}{Lucchetti et~al\mbox{.}}{2022}]%
        {lucchetti2022coalition}
\bibfield{author}{\bibinfo{person}{Roberto Lucchetti}, \bibinfo{person}{Stefano Moretti}, {and} \bibinfo{person}{Tommaso Rea}.} \bibinfo{year}{2022}\natexlab{}.
\newblock \showarticletitle{Coalition formation games and social ranking solutions}. In \bibinfo{booktitle}{\emph{Proceedings of the International Conference on Autonomous Agents and Multiagent Systems (AAMAS)}}.
\newblock


\bibitem[\protect\citeauthoryear{Mutzari, Gan, and Kraus}{Mutzari et~al\mbox{.}}{2021}]%
        {mutzari2021coalition}
\bibfield{author}{\bibinfo{person}{Dolev Mutzari}, \bibinfo{person}{Jiarui Gan}, {and} \bibinfo{person}{Sarit Kraus}.} \bibinfo{year}{2021}\natexlab{}.
\newblock \showarticletitle{Coalition formation in multi-defender security games}. In \bibinfo{booktitle}{\emph{Proceedings of the AAAI Conference on Artificial Intelligence}}, Vol.~\bibinfo{volume}{35}. \bibinfo{pages}{5603--5610}.
\newblock


\bibitem[\protect\citeauthoryear{Ng, Lim, Dai, Xiong, Huang, Niyato, Hua, Leung, and Miao}{Ng et~al\mbox{.}}{2020}]%
        {ng2020joint}
\bibfield{author}{\bibinfo{person}{Jer~Shyuan Ng}, \bibinfo{person}{Wei Yang~Bryan Lim}, \bibinfo{person}{Hong-Ning Dai}, \bibinfo{person}{Zehui Xiong}, \bibinfo{person}{Jianqiang Huang}, \bibinfo{person}{Dusit Niyato}, \bibinfo{person}{Xian-Sheng Hua}, \bibinfo{person}{Cyril Leung}, {and} \bibinfo{person}{Chunyan Miao}.} \bibinfo{year}{2020}\natexlab{}.
\newblock \showarticletitle{Joint auction-coalition formation framework for communication-efficient federated learning in UAV-enabled internet of vehicles}.
\newblock \bibinfo{journal}{\emph{IEEE Transactions on Intelligent Transportation Systems}} \bibinfo{volume}{22}, \bibinfo{number}{4} (\bibinfo{year}{2020}), \bibinfo{pages}{2326--2344}.
\newblock


\bibitem[\protect\citeauthoryear{Qi, Huang, Zhou, Shi, Wu, and Xiao}{Qi et~al\mbox{.}}{2022}]%
        {qi2022task}
\bibfield{author}{\bibinfo{person}{Nan Qi}, \bibinfo{person}{Zanqi Huang}, \bibinfo{person}{Fuhui Zhou}, \bibinfo{person}{Qingjiang Shi}, \bibinfo{person}{Qihui Wu}, {and} \bibinfo{person}{Ming Xiao}.} \bibinfo{year}{2022}\natexlab{}.
\newblock \showarticletitle{A task-driven sequential overlapping coalition formation game for resource allocation in heterogeneous UAV networks}.
\newblock \bibinfo{journal}{\emph{IEEE Transactions on Mobile Computing}} (\bibinfo{year}{2022}).
\newblock


\bibitem[\protect\citeauthoryear{Rahwan and Jennings}{Rahwan and Jennings}{2008}]%
        {rahwan2008coalition}
\bibfield{author}{\bibinfo{person}{Talal Rahwan} {and} \bibinfo{person}{Nick Jennings}.} \bibinfo{year}{2008}\natexlab{}.
\newblock \showarticletitle{Coalition structure generation: Dynamic programming meets anytime optimisation}.
\newblock  (\bibinfo{year}{2008}).
\newblock


\bibitem[\protect\citeauthoryear{Saad, Han, Basar, Debbah, and Hjorungnes}{Saad et~al\mbox{.}}{2009}]%
        {saad2009selfish}
\bibfield{author}{\bibinfo{person}{Walid Saad}, \bibinfo{person}{Zhu Han}, \bibinfo{person}{Tamer Basar}, \bibinfo{person}{M{\'e}rouane Debbah}, {and} \bibinfo{person}{Are Hjorungnes}.} \bibinfo{year}{2009}\natexlab{}.
\newblock \showarticletitle{A selfish approach to coalition formation among unmanned air vehicles in wireless networks}. In \bibinfo{booktitle}{\emph{Proceedings of the International Conference on Game Theory for Networks}}. \bibinfo{pages}{259--267}.
\newblock


\bibitem[\protect\citeauthoryear{Shakeri, Al-Garadi, Badawy, Mohamed, Khattab, Al-Ali, Harras, and Guizani}{Shakeri et~al\mbox{.}}{2019}]%
        {shakeri2019design}
\bibfield{author}{\bibinfo{person}{Reza Shakeri}, \bibinfo{person}{Mohammed~Ali Al-Garadi}, \bibinfo{person}{Ahmed Badawy}, \bibinfo{person}{Amr Mohamed}, \bibinfo{person}{Tamer Khattab}, \bibinfo{person}{Abdulla~Khalid Al-Ali}, \bibinfo{person}{Khaled~A Harras}, {and} \bibinfo{person}{Mohsen Guizani}.} \bibinfo{year}{2019}\natexlab{}.
\newblock \showarticletitle{Design challenges of multi-UAV systems in cyber-physical applications: A comprehensive survey and future directions}.
\newblock \bibinfo{journal}{\emph{IEEE Communications Surveys \& Tutorials}} \bibinfo{volume}{21}, \bibinfo{number}{4} (\bibinfo{year}{2019}), \bibinfo{pages}{3340--3385}.
\newblock


\bibitem[\protect\citeauthoryear{Shapley}{Shapley}{1953}]%
        {1953A}
\bibfield{author}{\bibinfo{person}{L.~S. Shapley}.} \bibinfo{year}{1953}\natexlab{}.
\newblock \showarticletitle{A value for $n$-person games}.
\newblock \bibinfo{journal}{\emph{annals of mathematical studies}} (\bibinfo{year}{1953}).
\newblock


\bibitem[\protect\citeauthoryear{Shapley}{Shapley}{1996}]%
        {1996Potential}
\bibfield{author}{\bibinfo{person}{Monderer Lloyd~S. Shapley}.} \bibinfo{year}{1996}\natexlab{}.
\newblock \showarticletitle{Potential Games}.
\newblock \bibinfo{journal}{\emph{Games and Economic Behavior}} (\bibinfo{year}{1996}).
\newblock


\bibitem[\protect\citeauthoryear{Shehory and Kraus}{Shehory and Kraus}{1998}]%
        {shehory1998methods}
\bibfield{author}{\bibinfo{person}{Onn Shehory} {and} \bibinfo{person}{Sarit Kraus}.} \bibinfo{year}{1998}\natexlab{}.
\newblock \showarticletitle{Methods for task allocation via agent coalition formation}.
\newblock \bibinfo{journal}{\emph{Artificial intelligence}} \bibinfo{volume}{101}, \bibinfo{number}{1-2} (\bibinfo{year}{1998}), \bibinfo{pages}{165--200}.
\newblock


\bibitem[\protect\citeauthoryear{Tang, Duan, and Lao}{Tang et~al\mbox{.}}{2023}]%
        {tang2023swarm}
\bibfield{author}{\bibinfo{person}{Jun Tang}, \bibinfo{person}{Haibin Duan}, {and} \bibinfo{person}{Songyang Lao}.} \bibinfo{year}{2023}\natexlab{}.
\newblock \showarticletitle{Swarm intelligence algorithms for multiple unmanned aerial vehicles collaboration: A comprehensive review}.
\newblock \bibinfo{journal}{\emph{Artificial Intelligence Review}} \bibinfo{volume}{56}, \bibinfo{number}{5} (\bibinfo{year}{2023}), \bibinfo{pages}{4295--4327}.
\newblock


\bibitem[\protect\citeauthoryear{Vig and Adams}{Vig and Adams}{2006}]%
        {vig2006multi}
\bibfield{author}{\bibinfo{person}{Lovekesh Vig} {and} \bibinfo{person}{Julie~A Adams}.} \bibinfo{year}{2006}\natexlab{}.
\newblock \showarticletitle{Multi-robot coalition formation}.
\newblock \bibinfo{journal}{\emph{IEEE transactions on robotics}} \bibinfo{volume}{22}, \bibinfo{number}{4} (\bibinfo{year}{2006}), \bibinfo{pages}{637--649}.
\newblock


\bibitem[\protect\citeauthoryear{Wang, Song, Han, and Jiao}{Wang et~al\mbox{.}}{2013}]%
        {wang2013dynamic}
\bibfield{author}{\bibinfo{person}{Tianyu Wang}, \bibinfo{person}{Lingyang Song}, \bibinfo{person}{Zhu Han}, {and} \bibinfo{person}{Bingli Jiao}.} \bibinfo{year}{2013}\natexlab{}.
\newblock \showarticletitle{Dynamic popular content distribution in vehicular networks using coalition formation games}.
\newblock \bibinfo{journal}{\emph{IEEE Journal on Selected Areas in Communications}} \bibinfo{volume}{31}, \bibinfo{number}{9} (\bibinfo{year}{2013}), \bibinfo{pages}{538--547}.
\newblock


\bibitem[\protect\citeauthoryear{Wu and Ramchurn}{Wu and Ramchurn}{2020}]%
        {wu2020monte}
\bibfield{author}{\bibinfo{person}{Feng Wu} {and} \bibinfo{person}{Sarvapali~D Ramchurn}.} \bibinfo{year}{2020}\natexlab{}.
\newblock \showarticletitle{Monte-Carlo tree search for scalable coalition formation}. In \bibinfo{booktitle}{\emph{Proceedings of the International Joint Conference on Artificial Intelligence (IJCAI)}}. \bibinfo{pages}{407--413}.
\newblock


\bibitem[\protect\citeauthoryear{Wu and Shang}{Wu and Shang}{2020}]%
        {WU2020208}
\bibfield{author}{\bibinfo{person}{Han Wu} {and} \bibinfo{person}{Huiliang Shang}.} \bibinfo{year}{2020}\natexlab{}.
\newblock \showarticletitle{Potential game for dynamic task allocation in multi-agent system}.
\newblock \bibinfo{journal}{\emph{ISA transactions}}  \bibinfo{volume}{102} (\bibinfo{year}{2020}), \bibinfo{pages}{208--220}.
\newblock


\bibitem[\protect\citeauthoryear{Xiong, Zheng, Ruan, Wang, Tang, Dong, and Li}{Xiong et~al\mbox{.}}{2020}]%
        {xiong2020energy}
\bibfield{author}{\bibinfo{person}{Fei Xiong}, \bibinfo{person}{Hao Zheng}, \bibinfo{person}{Lang Ruan}, \bibinfo{person}{Hai Wang}, \bibinfo{person}{Lijuan Tang}, \bibinfo{person}{Xu Dong}, {and} \bibinfo{person}{Aijing Li}.} \bibinfo{year}{2020}\natexlab{}.
\newblock \showarticletitle{Energy-saving data aggregation for multi-UAV system}.
\newblock \bibinfo{journal}{\emph{IEEE Transactions on Vehicular Technology}} \bibinfo{volume}{69}, \bibinfo{number}{8} (\bibinfo{year}{2020}), \bibinfo{pages}{9002--9016}.
\newblock


\bibitem[\protect\citeauthoryear{Yang and Luo}{Yang and Luo}{2007}]%
        {yang2007coalition}
\bibfield{author}{\bibinfo{person}{Jingan Yang} {and} \bibinfo{person}{Zhenghu Luo}.} \bibinfo{year}{2007}\natexlab{}.
\newblock \showarticletitle{Coalition formation mechanism in multi-agent systems based on genetic algorithms}.
\newblock \bibinfo{journal}{\emph{Applied Soft Computing}} \bibinfo{volume}{7}, \bibinfo{number}{2} (\bibinfo{year}{2007}), \bibinfo{pages}{561--568}.
\newblock


\bibitem[\protect\citeauthoryear{Zhang, Xu, Wu, Luo, Xu, Chen, Anpalagan, and Zhang}{Zhang et~al\mbox{.}}{2018}]%
        {zhang2018context}
\bibfield{author}{\bibinfo{person}{Yuli Zhang}, \bibinfo{person}{Yuhua Xu}, \bibinfo{person}{Qihui Wu}, \bibinfo{person}{Yunpeng Luo}, \bibinfo{person}{Yitao Xu}, \bibinfo{person}{Xueqiang Chen}, \bibinfo{person}{Alagan Anpalagan}, {and} \bibinfo{person}{Daoqiang Zhang}.} \bibinfo{year}{2018}\natexlab{}.
\newblock \showarticletitle{Context awareness group buying in D2D networks: A coalition formation game-theoretic approach}.
\newblock \bibinfo{journal}{\emph{IEEE Transactions on Vehicular Technology}} \bibinfo{volume}{67}, \bibinfo{number}{12} (\bibinfo{year}{2018}), \bibinfo{pages}{12259--12272}.
\newblock


\end{thebibliography}


\end{document}